\theoremstyle{plain}
\newtheorem{thm}[subsection]{Theorem}
\newtheorem{lem}[subsection]{Lemma}
\newtheorem{cor}[subsection]{Corollary}
\theoremstyle{definition}
\newtheorem{remark}[subsection]{Remark}
\newtheorem{definition}[subsection]{Definition}
\newtheorem{exm}[subsection]{Example}
\newtheorem{question}[subsection]{Question}
\newtheorem*{ack}{Acknowledgements}
\numberwithin{equation}{section}
\newcommand{\F}{{\mathbb F}}
\newcommand{\Z}{{\mathbb Z}}
\newcommand{\R}{{\mathbb R}}
\newcommand{\Fq}{\F_q}
\newcommand{\Fqstar}{\Fq\sp{\,\times}}
\newcommand{\conv}{\mbox{\rm conv}}
\renewcommand{\P}{{\mathbb P}}
\begin{document}
\title[Remarks on generalized toric codes]{Remarks on generalized toric codes}
\author{John Little}
\date{September 12, 2011}
\email{\href{mailto:little@mathcs.holycross.edu}{little@mathcs.holycross.edu}}
\urladdr{\href{http://mathcs.holycross.edu/~little/homepage.html/}%
{http://mathcs.holycross.edu/\~{}little/homepage.html}}
\address{Department of Mathematics and Computer Science,
College of the Holy Cross,  Worcester, MA 01610}
\subjclass[2000]{Primary 94B27; Secondary 52B20, 14M25}
\keywords{coding theory, toric code, Minkowski length}

\begin{abstract}
This note presents some new information on how the minimum distance
of the generalized toric code corresponding to a fixed set
of integer lattice points $S \subset \R^2$ varies with the base field.  The
main results show that in some cases, over sufficiently large
fields, the minimum distance of the code corresponding to a
set $S$ will be the same as that of the code corresponding
to the convex hull $\conv(S)$.  In an example, we will also
discuss a $[49,12,28]$ generalized toric code over $\F_8$,
better than any previously known code according to M. Grassl's
online tables, as of September 2011.
\end{abstract}

\maketitle
\section{Introduction}
We consider linear block codes over finite fields $\Fq$, that
is, vector subspaces $C \subset \Fq^n$.  Following standard
notation in coding theory, $n$ will always denote
the block length, and $k$ will always denote the dimension
of $C$ as a vector space over $\Fq$.  An $[n,k,d]$
code is a linear code with block length $n$, dimension $k$
and minimum distance $d$.  The book \cite{hp} is a good
general reference for other coding theory concepts.

Toric codes are a class of $m$-dimensional cyclic codes introduced
by J. Hansen in \cite{h1}, \cite{h2}.  Hansen's description
uses the geometry of toric varieties corresponding
to polytopes, but toric codes may also be understood
within the general context of evaluation codes.

\begin{definition}
Let $P$ be a rational polytope (the convex hull of a
finite set of integer lattice points), contained in $[0,q-2]^m \subset \R^m$.
Let $\Fq$ be a finite field with primitive element
$\alpha$. For $f \in \Z^m$ with $0\le f_i \le q-2$
for all $i$, let $p_f = (\alpha^{f_1},\ldots,\alpha^{f_m})$
in $(\Fqstar)^m$. For any $e = (e_1,\ldots,e_m) \in P \cap \Z^m$,
let $x^e$ be the corresponding monomial and write
$$(p_f)^e = (\alpha^{f_1})^{e_1}\cdots (\alpha^{f_m})^{e_m} = \alpha^{\langle f,e\rangle}.$$
The toric code over the field
$\Fq$ associated to $P$, denoted by $C_P(\Fq)$, is the linear code of
block length $n = (q-1)^m$ with generator matrix
$$G = ((p_f)^e).$$
The rows are indexed by the $e \in P \cap \Z^m$, and the columns are
indexed by the $p_f \in (\Fqstar)^m$.
In other words, letting $L = \mbox{\rm Span}\{x^e : e \in P\cap \Z^m\}$, we
define the evaluation mapping
\begin{eqnarray*} \mbox{\rm ev} : L & \to & \Fq^{\,(q-1)^m}\\
                                  g & \mapsto & (g(p_f) : p_f \in (\Fqstar)^m)
\end{eqnarray*}
Then $C_P(\Fq) = \mbox{\rm ev}(L)$.
\end{definition}

If $P$ is the interval $[0,\ell-1] \subset \R$, then $C_P(\Fq)$ is the
Reed-Solomon code $RS(\ell,q)$.  So toric codes are, in a sense,
generalizations of Reed-Solomon codes.

In considering code equivalences, the description of dual codes of toric codes,
minimum distance bounds, etc. (see the articles cited below), one is naturally led to consider ``generalized''
toric codes, defined by the same construction, but using an arbitrary set $S$
of integer lattice points in $[0,q-2]^m \subset \R^m$ instead of the set of
all lattice points in a convex polytope.  We will use
the parallel notation $C_S(\Fq)$ for these codes.  If $P = \conv(S)$, then
the code $C_S(\Fq)$ is a subcode of $C_P(\Fq)$.  In algebraic geometry
terms, the $C_S(\Fq)$ can be defined using an incomplete linear system
$V \subset |\mathcal{O}_{X_P}(D_P)|$, where
$X_P$ is the toric variety determined by $P$ and $D_P$ is the
corresponding divisor class on $X_P$.

The survey \cite{mr} covers most of the work on these codes contained
in \cite{j}, \cite{ls}, \cite{r1}, \cite{r2}, \cite{ss}, and \cite{br}.
Not all toric codes or generalized toric codes are as good as Reed-Solomon codes
from the coding theory perspective, but the class of generalized toric codes does
contain some very good codes.  See \S 2 below for a new example.

From now on we will concentrate on the case $m = 2$. 
The principal new results of this note are some observations about
the way the minimum distance of the generalized toric
code $C_S(\Fq)$ depends on $q$ and how it relates to
the minimum distance of $C_P(\Fq)$ if $P = \conv(S)$ and
$q$ is large enough.  These rely
on the connection between the minimum distance of toric
codes and Minkowski sum decompositions of subpolytopes
$Q \subset P$ first noticed by Little and Schenck in \cite{ls}
and later developed and refined by Soprunov and Soprunova
in \cite{ss}.  The other ingredient is some statements about
the distribution of polynomials in $\Fq[u]$ with
given factorization patterns from \cite{c}.

The main result will say that in many
situations, and for large enough $q$, the ``missing''
lattice points in $(P \cap \Z^n) \setminus S$ do not
always help, in the sense that $d(C_{S'}(\Fq))$
can equal  $d(C_S(\Fq))$ for many $S' \supsetneq S$.
In particular, there are even situations where $S' = P \cap \Z^n$
gives a code with the same minimum distance over all $\Fq$
with sufficiently large characteristic.
Read one way, for a fixed polytope $P$, our results
say that the generalized toric codes for $S \subset P$
tend to give interesting results only over small fields.
On the other hand, these results also can help
to identify situations when a proper supercode of
a generalized toric code has the same minimum distance.
Hence they can be used to find improved examples with
the same $d$ but a higher code rate $k/n$.

To conclude, we will also make some remarks about
the toric and generalized toric codes from the ``exceptional triangle'' 
$T_0$ studied in \cite{ss} and the set $S$ obtained by
deleting the interior lattice point.  In this case, when
certain coefficients are nonzero, the polynomials that 
are evaluated to produce the codewords define elliptic curves.
Some facts about numbers of points on elliptic curves
over finite fields provide some interesting results in
this case and provide some extra detail concerning the 
Minkowski decompositions in \cite{ss}.  In this case, 
we will see that the pattern of how $C_S(\Fq)$ depends
on $q$ is considerably more intricate.  

\section{A new best known code}

One of the reasons for the interest in the toric code construction
is that a number of isolated examples of very good codes have
been produced this way.  
 For instance, using certain heuristic
search methods, students at the MSRI-UP 2009 REU program found
\begin{itemize}
\item an $m = 3$ generalized toric code over $\F_5$ with parameters
$[64,8,42]$, and
\item an $m = 2$ generalized toric code over $\F_8$ with
parameters $[49,8,34]$.
\end{itemize}
(Both are reported in \cite{g} and, as of September 2011, are still the best known
codes for these $n,k$ over these fields.)

Here is a new, similar, example.  

\begin{exm}
\label{NewBest4912F8}
Consider the
generalized toric code over $\F_8$ corresponding to the set
$$S = \{(3,0),(4,1),(0,2),(1,2),(2,2),(0,3),(1,3),(4,3),(5,3),(0,4),(2,4),(4,5)\}, $$ 
marked with solid circles in Figure~1.  The polygon
$P$ is the convex hull of $S$, and contains nine lattice points
other than the points of $S$, shown as empty circles.

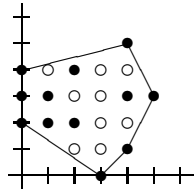
\begin{figure}
\begin{picture}(70,70)(0,0)
\put(5,0){\line(0,1){70}}
\put(0,5){\line(1,0){70}}
\put(15,2){\line(0,1){6}}
\put(25,2){\line(0,1){6}}
\put(35,2){\line(0,1){6}}
\put(45,2){\line(0,1){6}}
\put(55,2){\line(0,1){6}}
\put(65,2){\line(0,1){6}}
\put(15,2){\line(0,1){6}}
\put(2,15){\line(1,0){6}}
\put(2,25){\line(1,0){6}}
\put(2,35){\line(1,0){6}}
\put(2,45){\line(1,0){6}}
\put(2,55){\line(1,0){6}}
\put(2,65){\line(1,0){6}}
\put(35,5){\circle*{4}}
\put(25,15){\circle{4}}
\put(35,15){\circle{4}}
\put(45,15){\circle*{4}}
\put(5,25){\circle*{4}}
\put(15,25){\circle*{4}}
\put(25,25){\circle*{4}}
\put(35,25){\circle{4}}
\put(45,25){\circle{4}}
\put(5,35){\circle*{4}}
\put(15,35){\circle*{4}}
\put(25,35){\circle{4}}
\put(35,35){\circle{4}}
\put(45,35){\circle*{4}}
\put(55,35){\circle*{4}}
\put(5,45){\circle*{4}}
\put(15,45){\circle{4}}
\put(25,45){\circle*{4}}
\put(35,45){\circle{4}}
\put(45,45){\circle{4}}
\put(45,55){\circle*{4}}
\put(5,25){\line(3,-2){30}}
\put(35,5){\line(1,1){10}}
\put(45,15){\line(1,2){10}}
\put(55,35){\line(-1,2){10}}
\put(46,55){\line(-4,-1){40}}
\end{picture}
\caption{The set $S$ and the polygon $P$ in Example~\ref{NewBest4912F8}.}
\end{figure}

There are reducible sections of the line bundle $\mathcal{O}_{X_P}(D_P)$ of the form
$$f(x,y) \equiv y^3x^4(x - \alpha_1)(x - \alpha_2)(x - \alpha_3) \bmod \langle x^7 - 1, y^7 - 1\rangle$$
with the $\alpha_i$ distinct and $\alpha_1 + \alpha_2 + \alpha_3 = 0$.  
Since these have exactly 21 zeroes
at the points in $(\F_8^{\,\times})^2$, we see $d(C_S(\F_8)) \le 28$.  
Using David Joyner's Magma procedures for toric codes (\cite{j}),
it can be checked that this is a $[49,12,28]$ code
over $\F_8$.  In other words, polynomials in two variables
that are linear combinations of the 12 monomials corresponding
to the points in $S$ can have at most $21$ zeroes at the points in
$(\F_8^{\,\times})^2$.  This improves the $[49,12,27]$ code
previously known according to \cite{g}.  $\diamondsuit$
\end{exm}

\section{Factorization patterns for polynomials in one variable}

In this section, we will adapt some known facts about
the distribution of polynomials in $\Fq[u]$ with given
factorization patterns.  The original source for
these statements is \cite{c}; the survey \cite{ghp}
also contains a summary and discussion of the results we need.

Let $q = p^h$ and consider any linear family $\mathcal{F}$ of
polynomials of the form
\begin{equation}
\label{Family}
f(u) = u^\ell + t_1 u^{k_1} + \cdots + t_{m-1} u^{k_{m-1}} + t_m
\end{equation}
in $\Fq[u]$,
where
\begin{enumerate}
\item $p > \ell$,
\item the exponents $\ell > k_1 > \cdots > k_{m-1} > k_m = 0$
are fixed,
\item the coefficients $t_i$, $1 \le i \le m$ run
over the finite field $\Fq$, and
\item the $\ell, k_1, \ldots, k_{m-1}$
are \emph{not} all multiples of some fixed integer $j > 1$.
\end{enumerate}
Some natural questions in this context are:
\begin{itemize}
\item What can be said about the number of elements of the
family $\mathcal{F}$ that are irreducible in $\Fq[u]$?
\item More generally, what can be said about the number
of elements of the family $\mathcal{F}$ that factor in $\Fq[u]$
into a given number of factors of given degrees?
\end{itemize}

To describe the situation for the second question, we will
say that a polynomial $f(u)$ of degree $\ell$ has factorization pattern
$$\lambda = 1^{a_1} 2^{a_2} \cdots \ell^{a_\ell},$$
where $\sum_{i=1}^\ell a_i \cdot i = \ell$, if in $\Fq[u]$,
$f(u)$ factors as a product of $a_i$ irreducible factors of
degree $i$ (not necessarily distinct) for each $i = 1, \ldots, \ell$.
Let
$$T(\lambda) = \frac{1}{a_1! \cdots a_\ell! 1^{a_1} \cdots \ell^{a_\ell}}$$
be the proportion of elements of the symmetric group $S_n$ with
cycle decomposition of shape $\lambda$.  Then S. Cohen proved the following
statement.

\begin{thm}[{\cite[Theorem 3]{c}}]  Let $\mathcal{F}$ satisfy the
conditions above, and let $\mathcal{F}_\lambda$ be the subset
of $\mathcal{F}$ consisting of polynomials with factorization
pattern $\lambda$ in $\Fq[u]$.  Then for all $q$ sufficiently large,
\begin{equation}
\label{Cohen}
|\mathcal{F}_\lambda| = T(\lambda) q^m + O\left(q^{m - \frac{1}{2}}\right)
\end{equation}
where the implied constant depends only on $\ell$.
\end{thm}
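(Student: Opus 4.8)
The plan is to deduce Cohen's estimate from the function-field analogue of the Chebotarev density theorem, applied to the Galois group of the generic polynomial in the family $\mathcal{F}$. First I would set up the appropriate framework: the coefficients $t_1,\ldots,t_m$ are regarded as indeterminates, so $f(u) = u^\ell + t_1 u^{k_1} + \cdots + t_{m-1}u^{k_{m-1}} + t_m$ is a polynomial over the rational function field $K = \F_q(t_1,\ldots,t_m)$, with splitting field $L/K$ and Galois group $G = \mathrm{Gal}(L/K)$ acting on the $\ell$ roots. The key algebraic input is that $G$ is the full symmetric group $S_\ell$; this is where hypotheses (1) (the characteristic exceeds $\ell$, guaranteeing separability and tameness) and (4) (the exponents are not all divisible by a common $j>1$, which rules out the substitution $u \mapsto u^j$ that would force the Galois group into a proper imprimitive subgroup) are used. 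Granting this, a point $(t_1,\ldots,t_m) \in \F_q^m$ at which $f$ has factorization pattern $\lambda$ corresponds (away from a bad locus) to a Frobenius conjugacy class in $G = S_\ell$ of cycle type $\lambda$, whose relative size is exactly $T(\lambda)$.

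The main steps, in order, would be: (i) verify separability of the generic $f$ and identify the discriminant locus $\Delta \subset \A^m$, a hypersurface, outside of which the reduction-mod-$(t_i)$ specialization is well-behaved; (ii) prove $G \cong S_\ell$ — I expect one shows the generic $f$ has a specialization with a root of prime degree close to $\ell$ and another with an irreducible quadratic factor and the rest linear, forcing $G$ to be $2$-transitive and to contain a transposition, hence all of $S_\ell$ (alternatively, cite the relevant statement from \cite{c} directly, since we are allowed to assume it); (iii) apply the effective Chebotarev/Lang–Weil count for the number of $\F_q$-points on the scheme $\A^m \setminus \Delta$ lying over a given conjugacy class, which yields a main term $T(\lambda)\,q^m$ with the number of rational points on a variety of dimension $m$; (iv) bound the error: the $O(q^{m-1/2})$ comes from the Weil-type bound on the fluctuation of point counts in fibers of the covering $L \to K$ (a curve-over-$\F_q$ estimate applied fiberwise, or a direct geometric Lang–Weil estimate for the covering variety), together with the $O(q^{m-1})$ contribution from the discriminant locus $\Delta$ itself, which is absorbed; (v) check that the implied constant depends only on $\ell$ — it is controlled by the degree of $\Delta$ and the genus/degree data of the covering, all of which are bounded in terms of $\ell$ alone (the number $m$ and the exponents $k_i$ only affect which constant, uniformly in $\ell$).

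The main obstacle is step (ii), computing the generic Galois group and showing it is all of $S_\ell$: the delicate point is precisely condition (4), and one must show that its failure is the \emph{only} obstruction to the full symmetric group, i.e. that no other sporadic collapse of the Galois group can occur for these sparse trinomial-like families. Once $G = S_\ell$ is in hand, steps (iii)–(v) are the standard effective-Chebotarev-over-function-fields machinery; the error term is dictated by Weil's bound and is where the exponent $m - \tfrac12$ (rather than $m-1$) originates. Since the statement is quoted verbatim from \cite[Theorem 3]{c}, for the purposes of this note it suffices to record this outline and invoke Cohen's result; the survey \cite{ghp} gives the details in the form needed here.
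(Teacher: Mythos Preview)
The paper gives no proof of this statement at all: it is simply quoted from Cohen's article \cite{c} as a known result, with no argument supplied beyond the citation. Your proposal correctly recognizes this in its final sentence, and the outline you sketch (generic Galois group equal to $S_\ell$, then effective Chebotarev over function fields with Weil-type error bounds) is indeed the shape of Cohen's original argument, so nothing more is required here.
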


For our applications, we want to study factorizations of
shape $\lambda = \lambda_0 := 1^\ell$ where, in addition,
$$f(u) = \prod_{i=1}^\ell (u - \beta_i)$$
with $\beta_i$ distinct in $\Fqstar$.  Now the elements of
the family $\mathcal{F}$ with repeated roots (possibly in
some extension of $\Fq$) correspond to $\Fq$-rational points
$$(t_1,\ldots,t_m) \subset \mathcal{D}_{\mathcal{F}},$$
where
$\mathcal{D}_{\mathcal{F}} = V(\Delta_{\mathcal{F}})$ and
$$\Delta_{\mathcal{F}} = \mathrm{resultant}(f(u),f'(u),u)$$ 
is the discriminant of the family.
Note that $\mathcal{D}_{\mathcal{F}}$ is an $(m-1)$-dimensional affine 
hypersurface, singular and possible
reducible.  However, when the characteristic $p$ is large enough,
it is known by \cite[Theorem 3.1]{fs} that when the two conditions
after (\ref{Family}) hold, 
$\mathcal{D}_{\mathcal{F}}$ can have at most one irreducible 
component other than the hyperplane $V(t_m)$.
By the general bound in \cite[Proposition 12.1]{gl}, it follows that
\begin{equation}
\label{GhorpadeLachaud}
|D_{\mathcal{F}}(\Fq)| \le \delta \pi_{m-1},
\end{equation}
where $\pi_{m-1} = |\P^{m-1}(\Fq)| = q^{m-1} + q^{m-2} + \cdots + q + 1$, and
$\delta = \deg \Delta_{\mathcal F} \le 2\ell - 2$.  (The bound (\ref{GhorpadeLachaud}),
while sufficient for our purposes, is very weak.  Tighter bounds based on
a version of the Weil conjectures for singular varieties can also 
be found in \cite{gl}.)  We have the following
result.

\begin{cor}
\label{BigQ}
If $p > \ell$ and $q = p^h$ is sufficiently
large, there exist elements of the family ${\mathcal F} \subset \Fq[u]$ with factorization
pattern $\lambda_0 = 1^\ell$ in which the irreducible factors
are distinct, and for which all the roots are nonzero.  
\end{cor}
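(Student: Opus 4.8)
The plan is to combine Cohen's Theorem with the point count (\ref{GhorpadeLachaud}), treating the two extra requirements --- ``distinct roots'' and ``nonzero roots'' --- as exclusions of sets that are of strictly smaller order than the main count.

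First I would apply Cohen's Theorem, equation (\ref{Cohen}), to the family $\mathcal{F}$ (which satisfies conditions (1)--(4) under the hypotheses) with the pattern $\lambda_0 = 1^\ell$, so that $a_1 = \ell$ and $a_i = 0$ for $i \ge 2$. Then
\[
T(\lambda_0) = \frac{1}{\ell!\,1^\ell} = \frac{1}{\ell!},
\]
so $|\mathcal{F}_{\lambda_0}| = \tfrac{1}{\ell!}\,q^m + O(q^{m-1/2})$; in particular, for $q$ large a positive proportion of the polynomials in $\mathcal{F}$ split into $\ell$ linear factors over $\Fq$.

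Next I would discard the elements of $\mathcal{F}_{\lambda_0}$ that fail the extra conditions. Such a polynomial has a repeated root exactly when its parameter vector $(t_1,\dots,t_m)$ lies on the discriminant hypersurface $\mathcal{D}_{\mathcal{F}}$, and by (\ref{GhorpadeLachaud}) there are at most $\delta\,\pi_{m-1} \le (2\ell-2)(q^{m-1}+\cdots+1)$ such vectors; and a polynomial in $\mathcal{F}$ has $0$ as a root exactly when $t_m = f(0) = 0$, which cuts out a hyperplane with $q^{m-1}$ points. Subtracting, the number of $(t_1,\dots,t_m) \in \Fq^m$ for which $f$ has factorization pattern $\lambda_0$, distinct roots, and no zero root is at least
\[
\frac{1}{\ell!}\,q^m - C_1\, q^{m-1/2} - C_2\, q^{m-1}
\]
for constants $C_1, C_2$ depending only on $\ell$, and this is strictly positive once $q$ is large enough. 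Any such parameter vector produces an $f$ of the required form.

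The argument is essentially bookkeeping once the two quoted results are in hand; the one point to check with care is that both ``bad'' loci --- the discriminant hypersurface $\mathcal{D}_{\mathcal{F}}$ and the hyperplane $V(t_m)$ --- have $O(q^{m-1})$ rational points, hence are dominated not only by the main term $\tfrac{1}{\ell!}q^m$ but even by the error term $O(q^{m-1/2})$ from Cohen's Theorem, so that the crude estimate (\ref{GhorpadeLachaud}) is already enough and no sharper input is needed. All of the genuine difficulty is packaged inside Cohen's Theorem and the bound (\ref{GhorpadeLachaud}), both of which we are free to invoke.
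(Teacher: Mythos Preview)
Your proof is correct and follows essentially the same approach as the paper's own argument: both apply Cohen's theorem to obtain a main term of order $q^m$ for $|\mathcal{F}_{\lambda_0}|$, then subtract the $O(q^{m-1})$ contributions from the discriminant hypersurface (via the bound (\ref{GhorpadeLachaud})) and from the hyperplane $V(t_m)$, concluding by comparison of growth rates. Your write-up is simply a more explicit version of what the paper sketches in two sentences.
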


\begin{proof}  The first part of this comes from 
 comparing the orders of growth of the various terms in 
(\ref{Cohen}) and (\ref{GhorpadeLachaud}).
The last part of this is clear since if any of the roots is
zero, then the coefficient $t_m = 0$, and the locus where
that is true has dimension $m - 1$.   
\end{proof}

Note that in particular the conclusion of the Corollary holds for all sufficiently large
primes $p$.   Hence there will be elements of the family ${\mathcal F} \subset 
\F_p[u]$ with factorization pattern $\lambda_0 = 1^\ell$ in which the
irreducible factors are distinct and for which all the roots
are nonzero.  

\begin{remark}
\label{Extension}
If $p$ does not satisfy the condition $p > \ell$, or
if the conclusion of the Corollary does not hold for some $q$, it
is still always possible to find a finite extension of $\Fq$ for
which the statement of the Corollary holds.  Namely, let $f(u)$
be any one element of $\mathcal{F}$ with nonzero discriminant
and nonzero constant term.  If $K$ is a splitting field for
$f(u)$ over $\Fq$, then $f(u)$ splits completely with nonzero
roots in $K[u]$.
\end{remark}  

\begin{exm}  
\label{FamilyEx}
Consider the family $\mathcal{F}$ consisting of
polynomials of the form 
$$u^4 + t_1 u + t_2$$
in $\F_p[u]$ for prime $p$. 
Note that $\mathcal{F}$ contains elements of factorization
pattern $\lambda_0 = 1^4$ with $t_1 = 0$ whenever $p \equiv 1 \bmod 4$, 
since then $\F_p^{\,\times}$ contains 4th roots of unity.  
Doing computations in the Maple computer algebra system,
we found that for all except 5 of the primes $p < 1000$ and
all $p > 19$, there are elements of $\mathcal{F}$ of
factorization pattern $\lambda_0 = 1^4$, and with distinct roots.  
The obvious conjecture
is that there are such polynomials for all $p > 19$.  However,
more precise information about the constants in the asymptotic
result (\ref{Cohen}) than is currently available would be
necessary for a complete proof.  When $p \ge 5$, a constant multiple
of the discriminant of this family can be written as
$$\Delta_{\mathcal{F}} = \left(\frac{t_1}{4}\right)^4 - \left(\frac{t_2}{3}\right)^3.$$ 
The variety 
$\mathcal{D}_{\mathcal{F}} = V(\Delta_{\mathcal{F}})$ is a singular curve of genus 
0 in the $(t_1,t_2)$-plane.  
There are exactly $p - 1$ pairs $(t_1,t_2)$ with 
$t_1t_2 \ne 0$ that make the discriminant equal 0.

One interesting observation
is that the number of polynomials of
factorization pattern $\lambda_0 = 1^4$, and with distinct roots, is
always divisible by $p - 1$.  This follows because the 
mapping $\F_p^{\,\times} \times \mathcal{F} \to \mathcal{F}$ defined by
$$(\beta, f(u)) \mapsto \beta^{-4} f(\beta u)$$
defines an action of $\F_p^{\,\times}$ on $\mathcal{F}$ that preserves the 
factorization pattern, and for which all orbits have order
$p - 1$.  There are similar actions of $\F_q^{\,\times}$ on 
all $\mathcal{F}$ of the form (\ref{Family}) studied here, so this is a 
general phenomenon.  $\diamondsuit$
\end{exm}

\section{Application to generalized toric codes}

In this section we will apply Corollary~\ref{BigQ} to deduce some results
about the minimum distance of generalized toric codes.  First we 
recall the main idea developed in \cite{ls} and \cite{ss}.  Given
a polytope $P$, following \cite{ss}, we define the \emph{full Minkowski
length} of $P$ to be 
$$L(P) = \max \{\ell \mid \exists Q = Q_1 + \cdots + Q_\ell \subseteq P, \dim Q_i > 0 \text{\ all\ } i\},$$
where the addition signs refer to the Minkowski sum of polytopes. 
Theorem 2.6 of \cite{ss} shows that for toric surface
codes ($m = 2$, so $P \subset \R^2$), the full Minkowski length of $P$
is strongly tied to the minimum distance of $C_P(\Fq)$.  
In fact, if $q$ is larger than an explicit lower
bound depending on $L(P)$ and the area of $P$, then the minimum 
distance of the toric code $C_P(\F_q)$ is bounded below as follows:
\begin{equation}
\label{T0possible}
d(C_P(\Fq)) \ge (q - 1)^2 - L(P)(q - 1) - \lfloor 2\sqrt{q}\rfloor + 1,
\end{equation}
and if no maximally decomposable $Q \subset P$ contains an
exceptional triangle term 
(a triangle lattice equivalent to $T_0 = \conv\{(0,0),(1,2),(2,1)\}$),
then 
\begin{equation}
\label{NoT0}
d(C_P(\Fq)) \ge (q - 1)^2 - L(P)(q - 1).
\end{equation}

\begin{exm}
For instance, consider the polygon $P$ from Example~\ref{NewBest4912F8}.  
It can be seen that $L(P) = 6$ and there is a unique $Q \subset P$ with 
6 Minkowski summands, namely the rectangle $Q = \conv\{(0,2),(4,2),(4,4),(0,4)\}$,
which is the Minkowski sum of four primitive lattice segments parallel to the
$x$-axis and two primitive lattice segments parallel to the $y$-axis.  
The corresponding reducible sections of the line bundle $\mathcal{O}_{X_P}(D_P)$
have the form 
$$y^2(x - \alpha_1)(x - \alpha_2)(x - \alpha_3)(x - \alpha_4)(y - \beta_1)(y - \beta_2).$$
So in fact for $q$ large enough, we will have
$$d(C_P(\Fq)) = (q - 1)^2 - 6(q - 1) + 8$$
in this case.  $\diamondsuit$
\end{exm}

Suppose we are in the relatively common case in which the minimum
weight words in the toric code $C_P(\F_p)$ or $C_P(\Fq)$ come by evaluating
polynomials that are linear combinations of monomials
corresponding to a collinear string of $\ell$  
consecutive lattice points in the polytope.
This gives $Q = Q_1 + \cdots + Q_\ell \subset P$ (a Minkowski 
sum of $\ell$ primitive line segments).  
Say the corresponding monomials are 
$$u^a,  \ldots ,  u^{\ell+a}$$  
for some monomial  $u = x^r y^s$ with $\gcd(r,s) = 1$,
and some integers $\ell$
and $a \ge 0$.  The minimum weight codewords then are obtained
by evaluating completely reducible polynomials in $u$:
$$u^a (u - \alpha_1)\cdots (u - \alpha_\ell)$$
where $\alpha_i \in \Fqstar$ are distinct.  
 
Now suppose that we remove some of lattice points
between the endpoints in going to a subset $S \subset P\cap \Z^m$.
The polynomials evaluated to obtain codewords of $C_S(\F_p)$ will
contain linear combinations of some monomials $u^\ell$  and  $u^{k_i}$
with  $\ell > k_1 > \cdots > k_{m-1} > k_m = 0$ 
(after removing the factor $u^a$ that has no zeroes in the torus $(\F_p^{\,\times})^2$).
We obtain polynomials of the form
\begin{equation}
\label{Felt}
f(u) = u^\ell + t_1 u^{k_1} + ... + t_m.
\end{equation}
Note that $\ell \le p - 2$ here by our convention that $P \subset [0,p-2]^2$.  
Hence the condition $p > \ell$ is automatically satisfied.  
In other words, provided that the other conditions on the 
exponents $k_i$ are satisfied, we have elements of a family $\mathcal{F}$ of the 
same form as that considered in (\ref{Family}).  Then Corollary~\ref{BigQ} 
immediately implies the following result.

\begin{thm}
\label{dforSequalsdforP}
Let $P$ be an integral convex polygon in $\R^2$ of full Minkowski
length $L(P) = \ell$.  Suppose in addition that there is a unique $Q \subset P$ 
which decomposes as a sum of $\ell$ nonempty polygons, and that
each of them is a copy of a primitive lattice segment $I$, so $Q = \ell I$.  
Let $S \subset P \cap \Z^2$ satisfy 
\begin{enumerate}
\item $S$ contains the endpoints of $Q$, and
\item The $k_i$ and $\ell$ are not all multiples of any fixed
integer $j > 1$.
\end{enumerate}
Then for all primes $p$ sufficiently large and all $h \ge 1$, letting $q = p^h$,
we have
$$d(C_S(\Fq)) = d(C_P(\Fq)) = (q - 1)^2 - \ell(q - 1).$$
Moreover, for all $q$, there exists $h \ge 1$ such 
that the same statement is true if we replace $q$ by $q^h$.  
\end{thm}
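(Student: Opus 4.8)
The plan is to sandwich $d(C_S(\Fq))$ between the minimum distance of the ambient code $C_P(\Fq)$ and the Hamming weight of one explicitly chosen word of $C_S(\Fq)$, and to verify that both ends of the sandwich equal $(q-1)^2-\ell(q-1)$. Since $S\subset P\cap\Z^2$ we have $C_S(\Fq)\subseteq C_P(\Fq)$, hence $d(C_S(\Fq))\ge d(C_P(\Fq))$; so it will be enough to establish the lower bound $d(C_P(\Fq))\ge(q-1)^2-\ell(q-1)$ and to exhibit a word of $C_S(\Fq)$ of weight exactly $(q-1)^2-\ell(q-1)$.

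For the lower bound I would quote the Soprunov--Soprunova estimate \eqref{NoT0}. By hypothesis the unique subpolygon of $P$ admitting $L(P)=\ell$ Minkowski summands is $Q=\ell I$, a sum of one-dimensional primitive segments; in particular no maximally decomposable subpolygon of $P$ contains a term lattice-equivalent to $T_0$, so \eqref{NoT0} applies and gives $d(C_P(\Fq))\ge(q-1)^2-L(P)(q-1)=(q-1)^2-\ell(q-1)$. This holds as soon as $q$ exceeds the explicit bound of \cite[Theorem~2.6]{ss}, which depends only on $L(P)$ and the area of $P$, that is, only on $P$.

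For the word, let $v=(r,s)$ with $\gcd(r,s)=1$ be the primitive direction vector of $I$ and put $u=x^ry^s$; fix one endpoint $e_0$ of $Q$ and coordinatize the line through $Q$ so that $e_0$ sits at position $0$ and the other endpoint at position $\ell$. The lattice points of $S$ on this line that lie in $\{1,\dots,\ell-1\}$ occur at positions $\ell>k_1>\cdots>k_{m-1}>0$, and the monic polynomials $f(u)=u^\ell+t_1u^{k_1}+\cdots+t_{m-1}u^{k_{m-1}}+t_m$ with the $t_i$ ranging freely over $\Fq$ form a family $\mathcal{F}$ of the shape \eqref{Family}: the exponents are fixed, the coefficients are free, hypothesis~(2) of the theorem is precisely condition~(4) on $\mathcal{F}$, and $p>\ell$ holds automatically once $P\subset[0,p-2]^2$. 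Hence Corollary~\ref{BigQ} produces, for all $q$ larger than a bound that inspection of its proof shows to depend only on $\ell$, an $f\in\mathcal{F}$ of factorization pattern $\lambda_0=1^\ell$ with distinct nonzero roots $\alpha_1,\dots,\alpha_\ell\in\Fqstar$. Pulling back, $g(x,y)=x^{e_0}f(x^ry^s)$ is a linear combination of monomials $x^e$ with $e\in S$ (its monomials are the $x^{e_0+jv}$ for $j$ in the support of $f$, and hypothesis~(1) supplies the two endpoint monomials $x^{e_0}$ and $x^{e_0+\ell v}$), so $\mathrm{ev}(g)\in C_S(\Fq)$; and since $x^{e_0}$ is a unit on $(\Fqstar)^2$ while $(x,y)\mapsto x^ry^s$ is a surjective homomorphism $(\Fqstar)^2\to\Fqstar$ with all fibres of cardinality $q-1$, the word $\mathrm{ev}(g)$ vanishes at exactly $\ell(q-1)$ points and so has weight $(q-1)^2-\ell(q-1)$. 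Because $\mathrm{ev}(g)$ also lies in $C_P(\Fq)$, this gives $d(C_P(\Fq))\le(q-1)^2-\ell(q-1)$ as well, and combining with the lower bound and with $d(C_S(\Fq))\ge d(C_P(\Fq))$ yields $d(C_S(\Fq))=d(C_P(\Fq))=(q-1)^2-\ell(q-1)$.

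The crucial point---and, I expect, the main obstacle---is uniformity in $h$: one must be sure that neither Cohen's asymptotic \eqref{Cohen}/Corollary~\ref{BigQ} nor \cite[Theorem~2.6]{ss} conceals a dependence of the lower bound on $q$ upon the particular set $S$ (through $m$ or the $k_i$) or upon the exponent $h$, for only then does a single threshold on the characteristic $p$ suffice for every $q=p^h\ge p$. This is exactly what the clause ``the implied constant depends only on $\ell$'' in Cohen's theorem secures, together with $m\le\ell$, $\delta\le2\ell-2$, and the fact that the bad locus (repeated or zero roots) has at most $O(q^{m-1})$ points; writing these comparisons out carefully is the one nonformal step. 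Finally, for the ``moreover'' clause one drops the hypothesis that $p$ is large: by Remark~\ref{Extension} there is a finite extension of $\Fq$ over which some fixed $f_0\in\mathcal{F}$ splits into distinct linear factors with nonzero roots (hypothesis~(2) keeps $\Delta_{\mathcal{F}}$ from vanishing identically, so such an $f_0$ exists over a sufficiently large extension); enlarging this extension further, if needed, so that \eqref{NoT0} also holds over it, call the result $K$ and run the argument above verbatim over $K$. With $h=[K:\Fq]$, so that $|K|=q^h$, this proves the stated equality with $q$ replaced by $q^h$.
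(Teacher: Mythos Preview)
Your proposal is correct and follows essentially the same approach as the paper: both arguments sandwich $d(C_S(\Fq))$ using the subcode inclusion $C_S\subseteq C_P$, apply the Soprunov--Soprunova lower bound \eqref{NoT0} (which is available because the unique maximally decomposable $Q=\ell I$ contains no $T_0$ summand), and then produce a word of weight $(q-1)^2-\ell(q-1)$ by invoking Corollary~\ref{BigQ} to obtain an $f(u)$ of pattern $1^\ell$ with distinct nonzero roots and substituting $u=x^ry^s$, with Remark~\ref{Extension} supplying the ``moreover'' clause. If anything, you are more careful than the paper about the uniformity in $h$: the paper's proof is phrased for $\F_p$ and tacitly relies on the fact that the implied constant in Cohen's theorem depends only on $\ell$ to pass to all $q=p^h$, whereas you make this point explicit.
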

   
Note that hypothesis (2) here rules out the case $m = 1$.  This
is necessary, since polynomials of the form $u^\ell + t_m$ with 
$t_m \ne 0$ have $\ell$ distinct roots only when the field
contains $\ell$th roots of unity, or equivalently when $\ell | (q - 1)$. 
The conclusion of the theorem will also be valid in those cases, 
however.  See Example~\ref{k7Ex} below for some examples. 

\begin{proof}
Corollary~\ref{BigQ} implies that if $p$ is sufficiently 
large the family $\mathcal{F}$ as in (\ref{Felt}) will 
contain elements of factorization pattern $\lambda_0 = 1^\ell$
with distinct nonzero roots.  The corresponding polynomials
in $x,y$ obtained by substituting $u = x^r y^s$ will have
the same sort of factorization.  Since $\gcd(r,s) = 1$, 
each factor $x^r y^s - \alpha_i$
has exactly $p - 1$ zeroes in $(\F_q^{\,\times})^2$ and the
sets of roots for distinct $\alpha_i$ are disjoint.  Therefore
for all sufficiently large $p$, the toric code will contain
words of weight $(p - 1)^2 - \ell(p - 1)$, and the minimum
distance satisfies
$$d(C_S(\F_p)) \le (p - 1)^2 - \ell(p - 1).$$
The generalized toric code $C_S(\F_p)$ is a subcode of
the code from the full polygon $P$.  The lower bound from \cite{ss}
quoted above in (\ref{NoT0}) gives the reverse inequality
and the equality claimed in the statement follows for all $p$
sufficiently large.  The final part here follows by 
Remark~\ref{Extension}.
\end{proof}

Related statements along the lines 
of Theorem~\ref{dforSequalsdforP} apply in different
situations depending on the shape of the $Q$ giving the
maximally Minkowski-decomposable subpolygon of $P$.  
However, we will not pursue them here.  

\begin{exm}
\label{k7Ex}
Consider the generalized toric codes $C_S(\F_q)$ for 
the set $S$ indicated with solid circles in Figure~2.

The polygon $P$ here is the quadrilateral
$P = \conv\{(0,0),(2,0),(3,1),(1,4)\}$.  It can be seen that
the full Minkowski length of $P$ is $L(P) = 4$, and $P$ contains
just one Minkowski sum of 4 indecomposable polygons, namely
the line segment $Q = \conv\{(1,0),(1,4)\}$.  By the results
of Example~\ref{FamilyEx}, and Theorem~\ref{dforSequalsdforP},
we expect that 
$$d(C_S(\F_p)) = d(C_P(\F_p)) = (p - 1)^2 - 4(p - 1)$$
for all $p > 19$ and all $q = p^s$ for $s \ge 4$.  The following
results of Magma computations indicate what happens for $p \le 19$,
and also for the prime powers $7 \le q \le 19$:
\begin{eqnarray*}
d(C_S(\F_7)) = 18 & {\rm vs.} & 6^2 - 4\cdot 6 = 12\\
d(C_S(\F_8)) = 33 & {\rm vs.} & 7^2 - 4\cdot 7 = 21\\
d(C_S(\F_9)) = 32 & {\rm vs.} & 8^2 - 4\cdot 8 = 32\\
d(C_S(\F_{11})) = 70 & {\rm vs.} & 10^2 - 4\cdot 10  = 60\\
d(C_S(\F_{13})) = 96 & {\rm vs.} & 12^2 - 4\cdot 12  = 96\\
d(C_S(\F_{16})) = 165 & {\rm vs.} & 15^2 - 4\cdot 15  = 165\\
d(C_S(\F_{17})) = 192 & {\rm vs.} & 16^2 - 4\cdot 16  = 192\\
d(C_S(\F_{19})) =  270 & {\rm vs.} & 18^2 - 4\cdot 18  = 252.
\end{eqnarray*}

Over $\F_7$, there are no polynomials $u^4 + t_1 u + t_2$
with $t_2 \ne 0$ and factorization pattern $\lambda_0 = 1^4$.  
Since $7 \equiv 1 \bmod 3$, however, $\F_7$ contains cube roots of unity.
Hence, minimum weight codewords in that case come from
polynomials of the form $xy(y^3 - \beta)$.  

Over $\F_8$, the
minimum weight codewords come from genus 5 curves with 16
$\F_8$-rational points in the torus $(\F_8^{\,\times})^2$.  

The codes over $\F_9,\F_{13}$, and $\F_{17}$ all have 
$d = (q - 1)^2 - 4(q - 1)$.   
Since these $q$ all satisfy $q \equiv 1 \bmod 4$, 
these fields contain 4th roots of unity and hence there
are $u^4 + t_1 u + t_2$ with 
factorization pattern $\lambda_0 = 1^4$
with $t_1 = 0$ and $t_2 \ne 0$.  Minimum weight codewords
come from polynomials of the form
$x(y^4 - \beta)$ with $\beta \ne 0$.    

The code over $\F_{11}$ illustrates
the fact that while there are polynomials
$u^4 + t_1 u + t_2$ with factorization pattern $\lambda_0 = 1^4$
over this field, all such polynomials have a repeated root.
Therefore the minimum codeword weight is 
$(q - 1)^2 - 3(q - 1)$ rather than $(q - 1)^2 - 4(q-1)$.

The code over $\F_{16}$ illustrates
the comment from Remark~\ref{Extension}.  Note
that $\F_{16}$ is the splitting field 
of the polynomial $y^4 + y + 1$ over $\F_2$. Hence we
obtain words of weight $15^2 - 4\cdot 15$ in this code too.

\begin{figure}
\begin{picture}(70,70)(0,0)
\put(5,0){\line(0,1){70}}
\put(0,5){\line(1,0){70}}
\put(15,2){\line(0,1){6}}
\put(25,2){\line(0,1){6}}
\put(35,2){\line(0,1){6}}
\put(45,2){\line(0,1){6}}
\put(55,2){\line(0,1){6}}
\put(65,2){\line(0,1){6}}
\put(15,2){\line(0,1){6}}
\put(2,15){\line(1,0){6}}
\put(2,25){\line(1,0){6}}
\put(2,35){\line(1,0){6}}
\put(2,45){\line(1,0){6}}
\put(2,55){\line(1,0){6}}
\put(2,65){\line(1,0){6}}
\put(5,5){\circle*{4}}
\put(15,5){\circle*{4}}
\put(25,5){\circle*{4}}
\put(15,15){\circle*{4}}
\put(25,15){\circle{4}}
\put(35,15){\circle*{4}}
\put(15,25){\circle{4}}
\put(25,25){\circle*{4}}
\put(15,35){\circle{4}}
\put(15,45){\circle*{4}}
\put(5,5){\line(1,0){20}}
\put(25,5){\line(1,1){10}}
\put(35,15){\line(-2,3){20}}
\put(15,45){\line(-1,-4){10}}
\end{picture}
\caption{The set $S$ and the polygon $P$ in Example~\ref{k7Ex}.}
\end{figure}
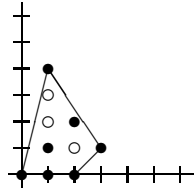
Finally, the code over $\F_{19}$ has some words of
weight $270$ from evaluation of polynomials $xy(y^3 - \beta)$
since $19 \equiv 1 \bmod 3$.  (There are also several
polynomials $y^4 + t_1 y + t_2$ that have factorization
patterns $\lambda_0 = 1^4$.  However all of the polynomials
that do factor that way have repeated roots.) 
$\diamondsuit$
\end{exm}

\section{The exceptional triangle}

We will now consider what happens for polygons $P$ where
a maximal Minkowski-decomposable subpolygon $Q$ has a
Minkowski decomposition involving the 
triangle $T_0 = \conv\{(0,0),(1,2),(2,1)\}$.  This is 
affine equivalent to the triangle used in \cite{ss} (see
Figure~3).  We will use this form because of its relation
to the well-known Hessian family of elliptic curves.  

This is the only case in the plane where a Minkowski-indecomposable
polygon contains an interior lattice point, namely $(1,1)$.  Hence
we will begin by comparing $d(C_{T_0}(\Fq))$ and $d(C_S(\Fq))$
where $S = \{(0,0),(1,2),(2,1)\}$, omitting the interior
lattice point.  The  presence of the interior lattice point in 
$T_0 \cap \Z^2$ shows that we are considering curves of (arithmetic) 
genus 1.  Hence the theory of elliptic curves will be important.
The facts about elliptic curves over finite fields
that we will need can be found in \cite{s} and \cite{h}.

The projective completions of the curves defined by 
the linear combinations of monomials corresponding
to the lattice points in $T_0$ form the family of cubic 
curves defined by homogeneous equations:
\begin{equation}
\label{cubics}
a x^2y + b x y^2  + c x y z + d z^3  = 0.
\end{equation}
If at least one of the coefficients $a,b,d$ vanishes, there
are at most $q-1$ affine $\Fq$-rational points with nonzero
coordinates on the corresponding curve.   It follows that 
$$d(C_{T_0}(\Fq)) \le (q - 1)^2 - (q - 1)$$
and similarly
\begin{equation}
\label{Sbound}
d(C_S(\Fq)) \le (q - 1)^2 - (q - 1).
\end{equation}
Hence, we want to concentrate on the cases with $abd \ne 0$,
but $c$ possibly $0$ in the case of $T_0$, 
and $c = 0$ for the case of $S$.  Thinking along
the lines of our earlier results, we pose the following
question.

\begin{figure}
\begin{picture}(70,70)(0,0)
\put(5,0){\line(0,1){70}}
\put(0,5){\line(1,0){70}}
\put(15,2){\line(0,1){6}}
\put(25,2){\line(0,1){6}}
\put(35,2){\line(0,1){6}}
\put(45,2){\line(0,1){6}}
\put(55,2){\line(0,1){6}}
\put(65,2){\line(0,1){6}}
\put(15,2){\line(0,1){6}}
\put(2,15){\line(1,0){6}}
\put(2,25){\line(1,0){6}}
\put(2,35){\line(1,0){6}}
\put(2,45){\line(1,0){6}}
\put(2,55){\line(1,0){6}}
\put(2,65){\line(1,0){6}}
\put(5,5){\circle*{4}}
\put(15,25){\circle*{4}}
\put(25,15){\circle*{4}}
\put(15,15){\circle{4}}
\put(5,5){\line(1,2){10}}
\put(5,5){\line(2,1){20}}
\put(25,15){\line(-1,1){10}}
\end{picture}
\caption{The exceptional triangle $T_0$ and the set $S$.}
\end{figure}
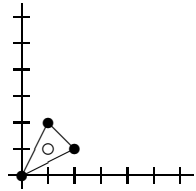
\begin{question}
\label{T0Question}
For sufficiently large primes $p$, or sufficiently high powers
$q = p^h$, is 
$$d(C_S(\Fq)) = d(C_{T_0})?$$
Equivalently, do we expect curves in the 
family (\ref{cubics}) with  $c = 0$  to achieve the maximum number of  
$\F_p$- or $\Fq$-rational points (among curves in the family)?
\end{question}

\begin{exm}
\label{T0vsS}
Some experimentation using Magma reveals that the answer to 
this question is not at all clear at first.  As in the discussion
above, $T_0$ is the exceptional triangle, and $S$ is
the set of vertices, omitting the interior lattice point.
\begin{eqnarray*}
d(C_S(\F_5)) = 12 & {\rm vs.} & d(C_{T_0}(\F_5)) = 10\\
d(C_S(\F_7)) = 27 & {\rm vs.} & d(C_{T_0}(\F_7)) = 27\\
d(C_S(\F_8)) = 42 & {\rm vs.} & d(C_{T_0}(\F_8)) = 40\\
d(C_S(\F_9)) = 56 & {\rm vs.} & d(C_{T_0}(\F_9)) = 52\\
d(C_S(\F_{11})) = 90 & {\rm vs.} & d(C_{T_0}(\F_{11})) = 85\\
d(C_S(\F_{13})) = 126 & {\rm vs.} & d(C_{T_0}(\F_{13})) = 126\\
d(C_S(\F_{16})) = 207 & {\rm vs.} & d(C_{T_0}(\F_{16})) = 204\\
d(C_S(\F_{17})) = 240 & {\rm vs.} & d(C_{T_0}(\F_{17})) = 235\\
d(C_S(\F_{19})) = 300 & {\rm vs.} & d(C_{T_0}(\F_{19})) = 300\\
d(C_S(\F_{23})) = 462 & {\rm vs.} & d(C_{T_0}(\F_{23})) = 454.
\end{eqnarray*}
\noindent
Note that the minimum distance of the code from $S$ is sometimes
greater, and sometimes the same as the
minimum distance of the code from $T_0$.  $\diamondsuit$
\end{exm}

We will see that there are arbitrarily large $q$ for which
$d(C_S(\Fq)) > d(C_{T_0}(\Fq))$.  To prove this,
we begin with some general observations.  Much of this has
been noted before in Theorem 2 of \cite{br}, but not in the
context of the possible Minkowski decompositions studied in \cite{ss}.   
Since we assume $abd \ne 0$, to normalize, we will take $d = 1$.  

There are three distinct points on the curve on the line 
at infinity $z = 0$ in all cases  (that is, whether or not $c = 0$).  
All three of them are flexes, and the lines  
$x = 0$, $y = 0$, $a x + b y + c z = 0$ are the inflectional 
tangents.  So the points in the torus $(\Fqstar)^2$ 
are all the affine points of the curve.  

If $abd \ne 0$ but  $c = 0$,  then the curve is smooth of genus 1.  
This can be seen since the system defining the Jacobian ideal:
\begin{align*}
                  2 a x y + b y^2 &=  y(2 a x + b y) = 0\\                   
                  a x^2 + 2 b x y &=  x(a x+ 2 b y) = 0 \\                  
                  3 z^2 &=  0.
\end{align*}
implies $z = 0$.  But, by (1), the curve must be smooth, since 
all three points on the line at infinity are smooth points.   

If $c \ne 0$, there are also singular (nodal) 
cubics in the family.  In this case the Jacobian system has the solution
$$x = \frac{-c}{3a},\quad  y = \frac{-c}{3b},\quad  z = 1.$$   
Substituting into the equation of the curve, we get:
$$\frac{-c^3}{27ab} - \frac{c^3}{27ab} + \frac{c^3}{9ab} + 1 = 0.$$ 
so if  $c^3 = - 27 a b$, the curve has a singular point at the point
with homogeneous coordinates
$$(x:y:z) = \left(\frac{-c}{3a}: \frac{-c}{3b}: 1\right).$$

If the origin of the group structure on the points of a smooth cubic
curve is placed at an inflection point, the other 
inflection points are points have order 3.   Since
the 3-torsion points form a subgroup of the group of $\Fq$-rational
points, we have the following statement.

\begin{lem}
\label{divby3}
For all the smooth elements $E$ of the family (\ref{cubics}) over
$\Fq$, the number of $\Fq$-rational points is \emph{divisible by 3}.
\end{lem}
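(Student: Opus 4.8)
The plan is to produce, inside the group of $\Fq$-rational points of each smooth member $E$ of the family (\ref{cubics}), an explicit subgroup of order $3$, and then to invoke Lagrange's theorem. This is essentially the argument sketched just above the statement, made a little more explicit.

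First I would normalize $d = 1$ and fix the group law on $E$ so that the origin $O$ is placed at one of the three points of $E$ on the line at infinity $z = 0$. As already observed, setting $z = 0$ in (\ref{cubics}) gives $xy(ax + by) = 0$, so these three points are $(1:0:0)$, $(0:1:0)$, and $(b:-a:0)$; since $ab \ne 0$ they are distinct, and since $a,b \in \Fq$ they are all $\Fq$-rational. I would take $O = (1:0:0)$, say.

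Next I would invoke the standard chord--tangent description of the group law with the origin at a flex (see \cite{s}, \cite{h}): three points of $E$ are collinear exactly when their sum is $O$, and the tangent line to $E$ at a flex $P$ meets $E$ at $P$ with intersection multiplicity $3$, so $P + P + P = O$ for every flex $P$. Applying the first fact to the three collinear flexes on $z = 0$, written $O$, $P$, $P'$, yields $O + P + P' = O$, hence $P' = -P$; and the second fact gives $3P = O$. Therefore $\{O, P, -P\}$ is a subgroup of $E(\Fq)$, and it has order exactly $3$ because the three flexes are distinct, so $P \ne O$. Since $E(\Fq)$ is a finite group containing a subgroup of order $3$, Lagrange's theorem gives $3 \mid |E(\Fq)|$.

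I expect no serious obstacle here: the argument is uniform over the family, applies whether or not $c = 0$, and holds for every smooth member regardless of whether $c^3 = -27ab$. The only points needing care are that the three inflection points on $z = 0$ are genuinely distinct and $\Fq$-rational, both already recorded above, and that the flex-based group law is available in the relevant characteristics — under the running hypotheses the three points of $z = 0$ are authentic flexes as stated, so this is the standard situation.
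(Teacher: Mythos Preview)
Your argument is correct and is essentially the same as the paper's: both place the origin of the group law at one of the three $\Fq$-rational inflection points on $z=0$, observe that the other two are nontrivial $3$-torsion points in $E(\Fq)$, and conclude via Lagrange's theorem. Your version simply makes explicit the identification of the three flexes $(1:0:0)$, $(0:1:0)$, $(b:-a:0)$ and the chord--tangent computation that the paper leaves implicit.
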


The discussion of \S 4.2 of \cite{h} shows, in fact, that the
family (\ref{cubics}) is a sort of universal family for
elliptic curves over $\Fq$ with nontrivial 3-torsion subgroups.  Every
isomorphism class of such curves is represented by some element of
our family. 

If $p\ge 3$, by an easy change of coordinates, 
the equations (\ref{cubics}) can be put into Weierstrass form.
Namely, dehomogenize with respect to $x$, and complete the
square in $y$.  If $p \ge 5$, by a further change of coordinates, 
the Weierstrass form can be taken to 
$$u^2 = v^3 + A v + B.$$
If $c = 0$ to start, then after this change of coordinates
it will be true that $A = 0$.  

The $j$-invariant of an elliptic curve in this form is 
$$j = 1728 \frac{4 A^3}{4 A^3 + 27 B^2}.$$ 
Hence $j = 0$ if and only if $A = 0$.  .

When  $q \equiv 2 \bmod 3$ for an odd prime power $q$,  
elliptic curves with $j = 0$ are \emph{supersingular} 
elliptic curves (see Proposition 4.31 of \cite{was}). 
There are many equivalent 
characterizations of this property and it follows that 
$$|E(\F_{q^h})| = \begin{cases}
                   q^h + 1 & h \text{\ odd}\\
                   q^h + 1 + 2q^{h/2} & \text{\ if\ } h \equiv 2 \bmod 4\\
                   q^h + 1 - 2q^{h/2} & \text{\ if\ } h \equiv 0 \bmod 4.
                  \end{cases}
$$
In other words, supersingular elliptic curves defined over
$\Fq$ achieve the Hasse-Weil \emph{upper} bound over $\F_{q^h}$ when
$h \equiv 2 \bmod 4$.  On the other hand, they achieve the Hasse-Weil
\emph{lower} bound over $\F_{q^h}$ when $h \equiv 0\bmod 4$. 

The above observations show that the answer to our Question~\ref{T0Question} 
is negative, because of some subtle arithmetic
facts concerning the numbers of points on certain elliptic
curves!  Some of the following reasoning also appears in the proof of
Theorem 2 in \cite{br}.  

\begin{thm}
\label{T0Sth}
Let $q$ be odd and $q \equiv 2\bmod 3$.  Then
$$d(C_S(\Fq)) = (q - 1)^2 - (q - 1) > d(C_{T_0}(\Fq)).$$
\end{thm}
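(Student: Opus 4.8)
The plan is to prove the two halves of the statement separately: first pin down $d(C_S(\Fq))$ exactly, and then exhibit a single codeword of $C_{T_0}(\Fq)$ of strictly smaller weight.

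For the value of $d(C_S(\Fq))$, recall that the codewords of $C_S(\Fq)$ are the evaluations on $(\Fqstar)^2$ of the polynomials $f(x,y) = a x^2 y + b x y^2 + d$, i.e.\ of the dehomogenizations of the members of (\ref{cubics}) with $c=0$. I would run through the possibilities for how many of $a,b,d$ vanish. If two or three of them vanish, $f$ is a nonzero monomial or constant and has no zeros in the torus, so the word has weight $(q-1)^2$. If exactly one of $a,b,d$ vanishes, then $f$ factors (for instance $d=0$ gives $xy(ax+by)$, and $a=0$ gives $bxy^2+d$), and a one-line count shows that $f$ has exactly $q-1$ zeros in $(\Fqstar)^2$, so the word has weight $(q-1)^2-(q-1)$. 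Finally, if $abd\ne 0$, then by the discussion preceding the statement the curve is smooth of genus $1$ with $j=0$, hence supersingular since $q$ is odd and $q\equiv 2\bmod 3$; the point-count formula quoted above (with $h=1$) gives $|E(\Fq)|=q+1$, and since the three points at infinity are flexes not lying in the torus, $f$ has exactly $q+1-3=q-2$ zeros in $(\Fqstar)^2$. As $q-2<q-1$, the minimum weight over all nonzero codewords is realized in the ``exactly one coefficient zero'' case, so $d(C_S(\Fq))=(q-1)^2-(q-1)$, matching (\ref{Sbound}).

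For the strict inequality I would produce an elliptic curve $E$ over $\Fq$ with $|E(\Fq)|=q+4$. Since $q\equiv 2\bmod 3$ we have $p\ne 3$, so the integer $a=-3$ is prime to $p$ and satisfies $|a|=3\le 2\sqrt q$ (recall $q\ge 5$); hence by the classical description of the integers occurring as the trace of Frobenius of an elliptic curve over $\Fq$ (Deuring for $q$ prime, Waterhouse in general, the relevant case being $\gcd(a,p)=1$) there is an elliptic curve over $\Fq$ of trace $-3$, i.e.\ with $q+4$ rational points. Because $q\equiv 2\bmod 3$, the order $q+4$ is divisible by $3$, so by Cauchy's theorem $E$ has a rational point of order $3$; by the universality of the family (\ref{cubics}) for elliptic curves with nontrivial rational $3$-torsion, $E$ is $\Fq$-isomorphic to a member of that family. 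That member is smooth, so it has $abd\ne 0$ (a plane cubic in the family with one of $a,b,d$ equal to $0$ is singular or reducible), and therefore — exactly as in the $C_S$ analysis — its three flexes lie at infinity and it has precisely $q+4-3=q+1$ points in the torus. The corresponding codeword of $C_{T_0}(\Fq)$ thus has weight $(q-1)^2-(q+1)$, whence $d(C_{T_0}(\Fq))\le (q-1)^2-(q+1)<(q-1)^2-(q-1)=d(C_S(\Fq))$, as claimed; and $q+1<(q-1)^2$ for $q\ge 5$ guarantees this codeword is nonzero.

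The conceptual heart — and the place where some care is needed — is the case $abd\ne 0$ in the first half. One has to see that forcing $c=0$ (deleting the interior lattice point of $T_0$) makes every smooth member of the family have $j=0$, hence supersingular with exactly $q+1$ rational points once $q\equiv 2\bmod 3$, so that the number of torus zeros is genuinely capped at $q-1$; by contrast the full $T_0$ family contains ordinary curves (those with $j\ne 0$, which require $c\ne 0$), and among those one can prescribe a negative trace. Everything else is bookkeeping together with the off-the-shelf existence of an elliptic curve of trace $-3$.
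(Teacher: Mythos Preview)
Your argument is correct and follows the same overall strategy as the paper: use supersingularity of the $j=0$ curves when $q\equiv 2\bmod 3$ to pin down $d(C_S(\Fq))$, then invoke Waterhouse together with the universality of the family for curves with rational $3$-torsion to produce a lower-weight word in $C_{T_0}(\Fq)$.

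The one difference worth noting is in the second half. The paper takes the \emph{largest} admissible $t\le\lfloor 2\sqrt{q}\rfloor$ with $\gcd(t,q)=1$ and $3\mid(q+1+t)$, thereby obtaining an upper bound on $d(C_{T_0}(\Fq))$ that is essentially sharp (close to $(q-1)^2-(q-2+\lfloor 2\sqrt{q}\rfloor)$). You instead fix the single value $t=3$ (trace $-3$), which is the smallest positive choice that works; this is cleaner and entirely sufficient for the strict inequality in the theorem, though it yields only the weaker bound $d(C_{T_0}(\Fq))\le (q-1)^2-(q+1)$. Your more explicit case analysis for $d(C_S(\Fq))$ is also a bit more detailed than the paper's, which simply cites (\ref{Sbound}) for the upper bound.
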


\begin{proof}  If
$q$ is odd and $q \equiv 2 \bmod 3$, then because the 
corresponding elliptic curves are supersingular (and recalling that we
must subtract the three points at infinity) all of the
codewords of $C_S(\F_q)$ obtained from evaluation of 
$axy^2 + bxy^2 + d$ with $abd \ne 0$
will have weight 
$$(q - 1)^2 - (q + 1 - 3) > (q - 1)^2 - (q - 1).$$
On the other hand, by (\ref{Sbound}) there are also codewords
of weight $(q - 1)^2 - (q - 1)$ from polynomials with one 
coefficient equal to zero.   Those give 
the minimum weight words in this case.  

On the other hand, we need to determine the minimum distance
of $C_{T_0}(\Fq)$.  By the theorem of Waterhouse (Theorem 4.1 of \cite{wat}), 
we know that there are elliptic curves over $\Fq$ with 
$$|E(\Fq)| = q + 1 + t$$
for all integers $t$ with $t \le \lfloor 2 \sqrt{q}\rfloor$ and 
$\gcd(t,q) = 1$ (as well as some other
possibilities).  By Lemma~\ref{divby3} and the universality of
our family (\ref{Family}) for curves with nontrivial 3-torsion,
there will be curves here with $q + 1 + t$ points rational
over $\Fq$ if $t$ is the \emph{largest} integer satisfying
$t \le \lfloor 2\sqrt{q}\rfloor$, $t$ 
prime to $q$, and such that $3|(q + 1 + t)$.  
These give codewords of considerably
smaller weight, in some cases close to 
$$(q - 1)^2 - (q + 1 + 2\sqrt{q} - 3).$$  
So the minimum distance of the code from $S$
will be strictly larger than that of the code from $T_0$ for
all such $q$.  
\end{proof}

\begin{exm}
\label{tieup}
For instance, refer again to Example~\ref{T0vsS}.  
With $p = 17 \equiv 2 \bmod 3$, Theorem~\ref{T0Sth} applies.  
The largest $t$ such that $t \le \lfloor 2\sqrt{17}\rfloor $ and $3|(17 + 1 + t)$ is 
$t = 6$.  There are elliptic curves in the family (\ref{Family}) with 
24 $\F_{17}$-rational points and hence 21 points in $(\F_{17}^{\,\times})^2$
disregarding the three points at infinity.  This gives
$d(C_{T_0}(\F_{17})) = 16^2 - 21 = 235$.
However, $d(C_S(\F_{17})) = 16^2 - 16 = 240$.  

On the other hand, when $q$ is odd and $\equiv 1 \bmod 3$, 
the elliptic curves with $j = 0$ are not supersingular.  And in
fact the minimum weight words in $C_{T_0}(\Fq)$ sometimes 
come from $C_S(\Fq)$ in this case.  $\diamondsuit$
\end{exm}

We conclude with a final remark.  
Phenomena similar to those seen in Theorem~\ref{T0Sth}, in 
which $d(C_S(\Fq)) > d(C_P(\Fq))$ can occur for arbitrarily
large $q$, also
occur in polygons for which the maximal
Minkowski-reducible subpolygon $Q$ contains a $T_0$ summand.
\begin{figure}
\begin{picture}(70,70)(0,0)
\put(5,0){\line(0,1){70}}
\put(0,5){\line(1,0){70}}
\put(15,2){\line(0,1){6}}
\put(25,2){\line(0,1){6}}
\put(35,2){\line(0,1){6}}
\put(45,2){\line(0,1){6}}
\put(55,2){\line(0,1){6}}
\put(65,2){\line(0,1){6}}
\put(15,2){\line(0,1){6}}
\put(2,15){\line(1,0){6}}
\put(2,25){\line(1,0){6}}
\put(2,35){\line(1,0){6}}
\put(2,45){\line(1,0){6}}
\put(2,55){\line(1,0){6}}
\put(2,65){\line(1,0){6}}
\put(5,5){\circle*{4}}
\put(15,25){\circle*{4}}
\put(25,15){\circle*{4}}
\put(15,15){\circle{4}}
\put(25,25){\circle*{4}}
\put(35,15){\circle*{4}}
\put(15,5){\circle*{4}}
\put(5,5){\line(1,2){10}}
\put(5,5){\line(2,1){20}}
\put(25,15){\line(-1,1){10}}
\put(15,5){\line(2,1){20}}
\put(35,15){\line(-1,1){10}}
\put(25,25){\line(-1,0){10}}
\end{picture}
\caption{A Minkowski sum with the exceptional triangle.}
\end{figure}
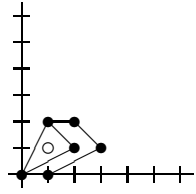

\begin{exm}
\label{MinkSum}
Consider the Minkowski sum $P = T_0 + I$, where $I$ is the 
interval $I =  \conv\{(0,0),(1,0)\}$, shown
in Figure~4.  We study the generalized toric codes for
$$S = (P \cap \Z^2) \setminus \{(1,1)\}$$
obtained by removing one of the two interior lattice points
from $P$.  

For all odd $q \equiv 2 \bmod 3$, we will again have 
$d(C_S(\Fq)) > d(C_P(\Fq))$. 
The minimum distances of the two codes over
small fields are as follows.

\begin{eqnarray*}
d(C_S(\F_7)) = 22 & {\rm vs.} & d(C_P(\F_7)) = 21\\
d(C_S(\F_8)) = 36 & {\rm vs.} & d(C_P(\F_8)) = 33\\
d(C_S(\F_9)) = 48 & {\rm vs.} & d(C_P(\F_9)) = 44\\
d(C_S(\F_{11})) = 80 & {\rm vs.} & d(C_P(\F_{11})) = 75\\
d(C_S(\F_{13})) = 114 & {\rm vs.} & d(C_P(\F_{13})) = 114\\
d(C_S(\F_{16})) = 192 & {\rm vs.} & d(C_P(\F_{16})) = 189\\
d(C_S(\F_{17})) = 224 & {\rm vs.} & d(C_P(\F_{17})) = 219\\
d(C_S(\F_{19})) = 282 & {\rm vs.} & d(C_P(\F_{19})) = 282.
\end{eqnarray*}

Note that $P$ contains the 
two term Minkowski sum $\conv(\{(1,0),(1,2)\})$,
as well as several other Minkowski decomposable parallelograms.
If instead of this $S$ we consider $R$ consisting of the $5$ noninterior
lattice points in $P$, then codewords obtained by evaluating reducible linear
combinations of the corresponding monomials have the 
minimum possible weights 
$$(q - 1)^2  - 2(q - 1)$$
in many of these cases, since there are reducible
polynomials of the form 
$$x(y - \alpha_1)(y - \alpha_2)$$
with $\alpha_1 \ne \alpha_2$ and $\alpha_1 + \alpha_2 = 0$  whenever $q$ is odd. 
These have $2(q - 1)$ zeroes in $(\Fq^{\,\times})^2$. $\diamondsuit$ 
\end{exm}

\begin{ack}
The work reported here began during a visit to the University
of Illinois at Urbana-Champaign, hosted by Hal Schenck.  The author
would like to thank him for his hospitality and valuable comments.
\end{ack} 

\bibliographystyle{amsalpha}

\begin{thebibliography}{10}



\bibitem{br} P. ~Beelen, D. ~Ruano, {\em The order bound for toric codes},
in M. ~Bras-Amoros and T. H\o holdt, eds. {\sl AAECC 2009}, Springer Lecture
Notes in Computer Science 5527, 1--10.

\bibitem{c} S. ~Cohen, {\em Uniform distribution of polynomials over finite fields},
J. London Math. Soc. Series 2, \textbf{6} (1972), 93--102.

\bibitem{fs} M. ~Fried, J. ~Smith, J,
{\em Irreducible discriminant components of coefficient spaces}.
Acta Arith. \textbf{44} (1984), 59--72. 

\bibitem{g} M. ~Grassl, {\em Code Tables:  Bounds on the parameters of various
types of codes}, online at {\sl www.codetables.de}, consulted July 21, 2011.

\bibitem{ghp} S. ~Gao, J. ~Howell, D. Panario,
{\em Irreducible polynomials of given forms}, in: {\sl Finite fields: theory, applications, and algorithms (Waterloo, ON, 1997)}, 43 -- 54,
Contemp. Math., 225, Amer. Math. Soc., Providence, RI, 1999.

\bibitem{gl} S. ~Ghorpade, G. ~Lachaud, {\em \'Etale cohomology, Lefschetz theorem, and number of points of singular varieties over finite fields},
    online {\sl arXiv:0808.2169}

\bibitem{h1}  J. ~Hansen,
        {\em Toric surfaces and error-correcting codes}, in
        Coding theory, cryptography and related areas (Guanajuato,
1998), 132--142, Springer, Berlin, 2000.

\bibitem{h2}  J. ~Hansen,
        {\em Toric varieties Hirzebruch surfaces and error-correcting codes},
        Appl. Algebra Engrg. Comm. Comput. \textbf{13} (2002), 289--300.

\bibitem{hp}  W. C. ~Huffman, V. ~Pless, {\em Fundamentals of error-correcting
codes}, Cambridge University Press, Cambridge, 2003.

\bibitem{h} D. ~Husem\"oller, {\em Elliptic Curves}, 2nd ed.  Graduate Texts
in Mathematics, \textbf{111}, Springer, New York, 2004.  

\bibitem{j}  D. ~Joyner,
        {\em  Toric codes over finite fields},
        Appl. Algebra Engrg. Comm. Comput. \textbf{15} (2004),
63--79.

\bibitem{ls} J. ~Little, H. ~Schenck, {\em Toric surface codes and
Minkowski sums}, SIAM J. Discrete Math. \textbf{20} (2006), 999--1014.

\bibitem{mr} E. ~Mart\'inez-Moro, Ruano, D.  {\em Toric codes} in {\sl Advances
in algebraic geometry codes}, Series in Coding Theory and Cryptology \textbf{5},
World Scientific, Singapore, 2008.

\bibitem{r1}  D. ~Ruano, {\em On the parameters of $r$-dimensional toric codes},
Finite Fields Appl. \textbf{13} (2007), 962--976.

\bibitem{r2}  D. ~Ruano, {\em On the structure of generalized toric codes},
J. Symb. Comp. \textbf{45} (2009), 499--506.

\bibitem{s} J. ~Silverman {\em The arithmetic of elliptic curves}, 2nd ed. 
Springer Graduate Texts in Mathematics 106, Springer, New York, 2009.

\bibitem{ss}  I. ~Soprunov, E. ~Soprunova, {\em Toric surface codes and Minkowski
length of polygons}, SIAM J. Discrete Math. \textbf{23} (2009), 384--400.

\bibitem{was}  L. ~Washington, {\em Elliptic Curves: Number Theory and Cryptography}, Chapman \& Hall/CRC,
Boca Raton, 2003.

\bibitem{wat}  W. ~Waterhouse, William, {\em Abelian varieties over finite fields}, Ann. Sci. \'Ecole Norm. Sup. (4) \textbf{2} (1969), 521--560.
\end{thebibliography}

\end{document}